\documentclass[a4paper,twocolumn,10pt,noarxiv]{quantumarticle}

\usepackage[numbers,sort&compress]{natbib}

\usepackage[T1]{fontenc}
\usepackage{amsmath,amssymb,amsthm,mathtools}
\usepackage{graphicx}
\usepackage{booktabs}
\usepackage{hyperref}
\usepackage{xurl}
\usepackage{cleveref}
\usepackage{enumitem}
\usepackage[normalem]{ulem}
\usepackage{bm}

\crefname{remark}{Remark}{Remarks}
\Crefname{remark}{Remark}{Remarks}
\crefname{proposition}{Proposition}{Propositions}
\Crefname{proposition}{Proposition}{Propositions}

\newtheorem{lemma}{Lemma}
\newtheorem{proposition}{Proposition}
\newtheorem{remark}{Remark}

\newcommand{\Z}{\mathbb{Z}}

\newcommand{\Nmoves}{N_{\mathrm{moves}}}
\newcommand{\Dtw}{D_{\mathrm{tweezer}}}
\newcommand{\Tjerk}{T_{\mathrm{jerk}}}

\title{Efficient atom rearrangements for quantum error correction primitives with a single AOD}

\author{Tom Hartweg}
\affiliation{QPerfect SAS, European Center for Quantum Sciences, 23 rue du Loess, Strasbourg, 67200, France}
\affiliation{University of Strasbourg and CNRS, CESQ and ISIS (UMR 7006)}
\email{tom.hartweg@qperfect.io}
\thanks{Corresponding author.}
\orcid{0009-0006-6600-230X}
\author{Asier Pi\~neiro Orioli}
\orcid{0009-0001-0735-4421}
\affiliation{QPerfect SAS, European Center for Quantum Sciences, 23 rue du Loess, Strasbourg, 67200, France}
\author{Hugo Perrin}
\orcid{0000-0002-1313-8549}
\affiliation{QPerfect SAS, European Center for Quantum Sciences, 23 rue du Loess, Strasbourg, 67200, France}
\author{Samuel Crew}
\affiliation{QPerfect SAS, European Center for Quantum Sciences, 23 rue du Loess, Strasbourg, 67200, France}

\begin{document}

\begin{abstract}
Neutral-atom quantum computers offer arbitrary connectivity enabled by atom transport. Some logical operations can then be simplified or reduced entirely to geometric rearrangements of the atoms. Minimizing the duration of these movements is therefore essential for high logical throughput. We introduce new primitives to shear, rotate and reflect 2D arrays of atoms in a static lattice using sweeps of a single dynamic crossed acousto-optic deflector (AOD) pair. Using (nega-)binary and geometric decompositions, we achieve an AOD stroke count scaling logarithmically in the linear size of the array. In one example, we use the Paeth decomposition to implement a $90^{\circ}$ rotation for a transversal Hadamard gate in a rotated surface code of distance $d$ in $3\lfloor\log_2(d-1)\rfloor + 4$ AOD strokes and $O(d^{1/3})$ constant-jerk time, against $O(d^2)$ strokes and $O(d^{7/3})$ time for atom-by-atom rearrangement.
\end{abstract}

\begin{table*}[t]
    \centering
    
    \small
    \begin{tabular}{lccc}
    \toprule
    Scheme & $\Nmoves$ & $\Dtw$ & $\Tjerk$ \\
    \midrule
    \textbf{$90^\circ$ Rotation (transversal \textsf{H}):} & & & \\
    \quad Sequential & $2d(d-1) + 2$ & $\tfrac{2}{3} (d^2-1) d$ & $O(d^{7/3})$ \\
    \quad Paeth + block & $4(d-1)$ & $4(d-1)$ & $4\alpha(d-1)$ \\
    \quad Paeth + binary & $\bm{3 \lfloor\log_2(d-1)\rfloor + 4}$ & $\lessapprox 8d - 13$
                         & $\lessapprox \bm{16\alpha(d-1)^{1/3}}$ \\
    \midrule
    \textbf{Fold-transversal \textsf{S}:} & & & \\
    \quad Sequential & $4 (d^2 - 2d + 1)$ & $\tfrac{2 d(d-1)(2d-1)}{3}$ & $O(d^{7/3})$ \\
    \quad Paeth + block (\emph{compact})& $10(d-1) - 1$ & $12(d-1)$ & $O(d)$ \\
    \quad Paeth + binary (\emph{compact})& $\lessapprox d + 4 \lfloor\log_2(d-1)\rfloor + 8$
                         & $\lessapprox 12(d-1)$ & $O(d)$ \\
    \quad Paeth + binary (\emph{logarithmic})& $\lessapprox \bm{6 \lfloor\log_2(d-1)\rfloor + 12}$
    & $\lessapprox 16 d - 18$ & $\bm{ O(d^{1/3})}$ \\
    \bottomrule
    \end{tabular}
    \caption{Resource cost of realising the transversal Hadamard and fold-transversal \textsf{S} gate on the rotated surface code, via the atom movements they require (not counting QEC half-cycles), as a function of the code distance $d$ (odd). The constant-jerk time $\Tjerk$ assumes a unit-length stroke takes time $\alpha$. For simplicity, each entry preceded by a $\lessapprox$ states that the exact value scales as the entry and is higher-bounded by it. }
    \label{tab:surfacecost}
\end{table*}

\section{Introduction}

Neutral atom quantum processors are a leading platform for fault-tolerant quantum computation that has recently demonstrated high-fidelity operations~\cite{evered2023, radnaev2025universal, tsai2025benchmarking, peper2025spectroscopy, Senoo2026highfidelity, tao2026finestructure}, high qubit numbers~\cite{Pause2024supercharged, Manetsch2025tweezer6100, Chiu2025continuous} and basic primitives of quantum error correction (QEC)~\cite{Bluvstein2024, SalesRodriguez2025, reichardt2025ftqc, computing2026toric, bluvstein2026ftqcarchitecture, Zhang2026erasure, mathiot2026differentialequations, rines2026demonstrationlogicalarchitecture}.
A defining feature of the platform is the ability to dynamically reconfigure the atoms' positions mid-circuit~\cite{bluvstein2022quantum, rines2026demonstrationlogicalarchitecture, mathiot2026differentialequations, barredo2016atombyatom}, enabling arbitrary connectivity between qubits.
Beyond relaxing connectivity constraints, this reconfigurability can substantially reduce the overhead of logical operations: several logical gates can be simplified, or reduced entirely to geometric rearrangements of the atoms, rather than compiled into sequences of physical entangling gates.
The cost then shifts onto the movement itself, which can be expensive in both time and fidelity; exploiting this at scale therefore requires efficient movement schedules.

Several QEC primitives exemplify this. Beyond the transversal \textsf{CNOT} already demonstrated experimentally~\cite{Bluvstein2024}, the transversal Hadamard on the rotated surface code requires a $90^\circ$ rotation of the data block~\cite{horsman2012surface, chen2026transversal}; the fold-transversal \textsf{S} requires gates between mirror pairs $(i,j) \leftrightarrow (j,i)$, i.e. links running along the $45^\circ$ diagonal. In higher-rate quantum low-density parity-check (LDPC) codes, some \emph{code automorphisms}~\cite{sayginel2025fault}---a stabilizer-preserving permutation of the data qubits---that realise gates on logically encoded qubits likewise require related physical transformations of qubit locations.

There exist different technologies to move atoms which differ in terms of speed, fidelity and flexibility. The current leading method is based on acousto-optical deflectors (AODs), which generate optical tweezers by diffracting a laser beam passing through a crystal using radio-frequency acoustic tones and movement by sweeping these tone frequencies.
This allows to move atoms at $\sim\mu\text{m}/\mu\text{s}$ velocities~\cite{bluvstein2022quantum, Manetsch2025tweezer6100}, and to produce multiple well-calibrated tweezer spots to move atoms in parallel, although subject to geometric constraints. In contrast, spatial light modulators (SLMs) enable high flexibility since they can produce arbitrary trap patterns and move atoms independently in parallel, but their speeds are currently limited to tens of milliseconds per move and fidelities can be sub-optimal due to trap inhomogeneities~\cite{nogrette2014holographic, Kim2016dynamicslm, Knottnerus2025assembly}. Optical lattices are a mature technology enabling high fidelity trapping but movement is generally restricted to collective shifts~\cite{mandel2003opticallattice, Chiu2025continuous}.

In this work, we develop efficient 2D AOD movement schedules for the QEC primitives above, reducing the number of independent moves and the time required to execute the corresponding logical operations. We take AOD constraints into account, in particular the restriction to addressing only rectangular grids of atoms at once. We further assume every move to be a rigid translation of all the atoms addressed along only the $x$ or the $y$ axis. Despite these constraints, we implement a transversal Hadamard on a rotated surface code of distance $d$ in $3\lfloor\log_2(d-1)\rfloor + 4$ strokes, and a fold-transversal S gate in $O(log(d))$ and less than $6 \lfloor\log_2(d-1)\rfloor + 12$ strokes, and $O(d^{1/3})$ constant-jerk time for both (\Cref{sec:rotation}). Those translate directly to the Bacon-Shor gadget for the addressable gate-based computation for La-Cross codes~\cite{pecorari2025addressable}. We also address some code automorphisms in the iceberg and toric codes. Importantly, our solutions require only a \emph{single} crossed AOD pair and no non-rigid movements, which can lead to interferences~\cite{bluvstein2026ftqcarchitecture}. We list the cost of these movements in the atom-by-atom scheme against our proposed schemes in \Cref{tab:surfacecost}.

Previous work includes Xu \emph{et al.}~\cite{xu2024constant}, who reduce 1D rearrangement to $O(\log \ell)$ moves using non-rigid movements, generalised to 2D by~\cite{constantinidesOptimalRoutingProtocols2024} with $O(d\log d)$ asymptotic cost.
Chen \emph{et al.}~\cite{chen2026transversal} propose to realise the rotated-surface-code Hadamard and fold-transversal \textsf{S} using a second AOD pair oriented at $45^\circ$; Sunami \emph{et al.}~\cite{sunami2025transversal} propose instead SLM rearrangement or rotating optical lattices.

\section{Method}
\label{sec:method}

In \Cref{sec:hardware} we outline the hardware assumptions underlying our AOD movement model, which we present in \Cref{sec:model} along with a lower bound on the number of strokes required to realise any prescribed set of displacements. \Cref{sec:shear} introduces the shear of a rectangular sub-array as a primitive and gives a logarithmic stroke implementation. In \Cref{sec:rotation} we compose three such shears via Paeth's decomposition to realise the $90^\circ$ rotation of an array. In \Cref{sec:negation}, we discuss binary decompositions of reflections and in \Cref{sec:45rotation} we discuss 45$^\circ$ reflections and rotations. 

\subsection{Hardware assumptions}
\label{sec:hardware}

We assume a hardware model with two types of atom traps: a \emph{static} 2D square lattice with lattice spacing $\lambda$ (generated by, e.g., an SLM, an AOD pair, or an optical lattice) and a \emph{dynamic} array generated by a 2D crossed AOD pair.
The dynamic array can reconfigure the atomic positions by picking up, moving and releasing sets of atoms. Each such \emph{stroke} must respect AOD constraints and other assumptions we make:
\begin{enumerate}
    \item The dynamic AOD pair is oriented along the same $x-y$ axes of the static lattice and can address any rectangular sub-array resulting from the intersection of a set of rows and columns.
    \item Atoms are only moved along $x$ or $y$ at once (no diagonal moves will be necessary for our protocols).
    \item All atoms in the selected sub-array are moved in block by the same amount. In principle, AODs allow for squeezing and stretching moves and independent movements of rows/columns of the trapped atoms (as long as rows/columns do not cross~\cite{xu2024constant}), but this is not necessary in our proposal, which is useful as it can lead to interference due to intermodulations~\cite{bluvstein2026ftqcarchitecture}.
    \item All atoms present in the lattice sites aligned with the dynamic AOD pair sites during pickup will be transferred. The inverse applies for release. The exact picking or release mechanism depends on the static array's technology, but typically involves aligning the dynamic and static arrays and increasing or decreasing the AOD's trap depth.
\end{enumerate}
To further ensure that atom coherence is preserved during the transport process collisions must be avoided. We assume:
\begin{enumerate}[resume]
    \item Collisions with both occupied and empty static traps during movement can always be avoided. When moving a full row (column) along $x$ ($y$), collisions are naturally avoided if that row's (column's) static traps can be turned off. If that is not possible, we can assume that movement happens at $\lambda/2$ in between lattice sites, which should be possible for traps generated by different lasers or if $\lambda$ is sufficiently large.
    \item Collisions at release time are avoided by ensuring that no atom occupies a static site overlapping with a dynamic trap, even if the latter is empty.
\end{enumerate}
We will be interested in quantifying the number of strokes, total distance and time necessary to implement several 2D transformations, for which we assume that:
\begin{enumerate}[resume]
    \item We may ignore the time necessary to transfer atoms between the static and dynamic arrays, as we expect time to be dominated by movement. Transfer time contribution scales linearly with the number of pickups and releases.
    \item Each move follows a constant-jerk trajectory to minimize motional excitation (the generalization to other movement schemes is easy).
    \item The potential extra $\lambda/2$ strokes necessary for avoiding collisions will be ignored for timing and distance, but the additional cost scales linearly with the number of strokes.
\end{enumerate}

\subsection{AOD movement model and lower bound}
\label{sec:model}

With the above assumptions, we model the static traps as a uniform integer lattice $\Z^2$ with at most one atom per site $(x,y)\in \Z^2$, and identify an \emph{atom array} $A$ with an ordered tuple of occupied sites, such that $A_i$ is the coordinate of atom $i$. We denote by $|A|$ the number of elements in $A$, i.e.~the number of atoms. A single 2D AOD stroke is a triple $M = (S_x, S_y, m)$, where $S_x, S_y \subset \Z$ are column and row selections (the AOD addresses every site in $S_x \times S_y$) and $m \in \Z \times \{0\} \cup \{0\} \times \Z$ is a horizontal or vertical displacement; the stroke sends each $a \in A \cap (S_x \times S_y)$ to $a + m$, and is \emph{valid} if the resulting array has at most one atom per site.
A \emph{movement scheme} is a sequence $(M_1, \dots, M_n)$ of strokes, with respective \emph{stroke displacements} $(m_1, \dots, m_n)$. We define the metrics $\Nmoves = n$, $\Dtw = \sum_i |m_i|$, and --- assuming the constant-jerk trajectories typical of high-fidelity AOD control~\cite{bluvstein2022quantum}, in which a stroke of length $\ell$ takes time $\alpha \ell^{1/3}$ with $\alpha = (12/j\lambda)^{1/3}$ ---
$\Tjerk = \alpha \sum_i |m_i|^{1/3}$.

\paragraph{Lower bound.}
We now prove a lower bound on $\Nmoves$ and $\Dtw$. For a starting array $A_s$ and a target $A_f$ of the same size ($|A_s| = |A_f|$), let $\Delta = \{A_{f,i} - A_{s,i} | i\in 1,...,|A_s|\} \setminus \{0\}$ be the set of required displacements for each atom, $\Delta_x, \Delta_y$ the $x$- and $y$-components of the required displacements, and let $\max_{\pm,x/y} = \max(\pm\Delta_{x/y} \cup \{0\})$ denote the maximal displacement among all atoms in every direction, $\pm x$ and $\pm y$. We then have the following:

\begin{lemma}\label{lem:lowerbound}
For any valid movement scheme realising $A_s \to A_f$,
\begin{align}
\Nmoves &\ge \lceil \log_2(|\Delta_x|+1) \rceil
         + \lceil \log_2(|\Delta_y|+1) \rceil,
         \label{eq:LBnmoves}\\
\Dtw    &\ge \max_{+,x} + \max_{-,x} + \max_{+,y} + \max_{-,y},
         \label{eq:LBdtw}
\end{align}
\end{lemma}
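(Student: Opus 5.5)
Both bounds follow from the same decomposition: the net displacement of every atom is a sum of stroke displacements, and the horizontal and vertical strokes can be treated separately. Split the scheme into the $n_x$ strokes with $m_k\in\Z\times\{0\}$ and the $n_y$ strokes with $m_k\in\{0\}\times\Z$, so that $\Nmoves=n_x+n_y$. For an atom $i$, let $I_i\subseteq\{1,\dots,n\}$ be the set of strokes that actually move it, i.e. those $k$ for which its current position lies in $S_{x,k}\times S_{y,k}$ at that time. Because the moves are rigid translations, the net displacement is
\[
A_{f,i}-A_{s,i}=\sum_{k\in I_i} m_k .
\]
Taking $x$-components, the $x$-displacement of atom $i$ equals $\sum_{k\in I_i\cap X} m_{k,x}$, where $X$ is the set of horizontal strokes.

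\textbf{Stroke count.} Every element of $\Delta_x$ is then a subset sum of the multiset $\{m_{k,x}:k\in X\}$. These $n_x$ numbers have at most $2^{n_x}$ subset sums, one of which is the empty sum $0$. Hence $\Delta_x\cup\{0\}$ has at most $2^{n_x}$ elements. Here I read $\Delta_x$ as the set of nonzero $x$-components; if $0$ were allowed in $\Delta_x$, the count $|\Delta_x|+1$ would be replaced by $|\Delta_x|$ and I would note this. This gives $|\Delta_x|+1\le 2^{n_x}$, so $n_x\ge\lceil\log_2(|\Delta_x|+1)\rceil$. The same argument gives $n_y\ge\lceil\log_2(|\Delta_y|+1)\rceil$, and adding the two yields \eqref{eq:LBnmoves}.

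The only care needed here is the convention for $\Delta_x$, since $\Delta$ excludes $0$ only as a vector. An atom can have zero $x$-component and nonzero $y$-component. The bound remains correct under either reading, because the value $0$ is always covered by the empty subset.

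\textbf{Distance.} Let $P_x=\sum_{k\in X,\,m_{k,x}>0} m_{k,x}$ and $N_x=\sum_{k\in X,\,m_{k,x}<0}|m_{k,x}|$. Each atom's $x$-displacement is a subset sum, so it is at most $P_x$ and at least $-N_x$. Therefore $\max_{+,x}\le P_x$ and $\max_{-,x}\le N_x$, where the $\max(\cdot\cup\{0\})$ convention makes these hold trivially when the relevant set is empty. The horizontal strokes contribute exactly $P_x+N_x$ to $\Dtw$. The same argument applies to the $y$ strokes, and summing over both directions gives \eqref{eq:LBdtw}.

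Neither step is hard. Validity of the scheme is never used; the bounds hold for any sequence of strokes. The only real point is the subset-sum observation, which relies on rigidity: each stroke moves every atom it selects by the same $m_k$, whatever the selection sets are.
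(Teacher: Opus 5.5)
Your proof is correct and essentially follows the paper's argument. For $\Nmoves$ you use the same per-axis subset-sum counting, and for $\Dtw$ the same fact that the positive and negative stroke lengths along each axis must cover the extreme displacements, which your $P_x,N_x$ bookkeeping makes explicit. You also rightly drop the paper's relaxation-and-reordering step, since an atom's $x$-displacement depends only on which horizontal strokes select it.

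One small caveat concerns your remark that the bound holds ``under either reading''. The stated inequality with $|\Delta_x|+1$ genuinely needs $0\notin\Delta_x$, which is the paper's convention that $|\Delta_{x/y}|$ counts only nonzero displacements; if $0$ is counted, only the weaker form $|\Delta_x\cup\{0\}|\le 2^{n_x}$ survives. For example, $\Delta=\{(1,0),(0,1)\}$ is realised in two strokes, whereas counting $0$ in $\Delta_x$ and $\Delta_y$ would give a bound of four.
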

\noindent
where $|\Delta_{x/y}|$ is the number of non-zero required displacements along $x/y$. As it will be used throughout the paper, we point out that if $|\Delta_{x/y}|>0$, we have $\lceil \log_2(|\Delta_{x/y}|+1) \rceil = \lfloor \log_2(|\Delta_{x/y}|)\rfloor +1 $.

\begin{proof}[Proof]
We obtain a lower bound by relaxing the constraints to permit multiple atoms per site and arbitrary atom selections (rather than only rectangular ones). In this setting horizontal and vertical strokes commute, and reordering them changes neither $\Nmoves$ nor $\Dtw$, so any scheme can be brought to the form of $n_x$ horizontal strokes followed by $n_y$ vertical strokes.

Consider the horizontal strokes, with stroke displacement list $(m_1, \dots, m_{n_x})$. Each atom is either selected or not in each stroke, so its net horizontal displacement is a subset sum $\sum_{i} b_i m_i$ with $b_i \in \{0,1\}$, taking at most $2^{n_x}$ distinct values. Since the scheme realises $A_s \to A_f$, every displacement in $\Delta_x$ arises this way as a subset sum; together with $0$ (the empty selection, excluded from $\Delta_x$) that is $|\Delta_x| + 1$ distinct values among at most $2^{n_x}$. Hence $2^{n_x} \ge |\Delta_x| + 1$, i.e. $n_x \ge \lceil \log_2(|\Delta_x|+1) \rceil$; the same argument applied to the vertical strokes gives $n_y \ge \lceil \log_2(|\Delta_y|+1) \rceil$, which gives
\cref{eq:LBnmoves}.

For \cref{eq:LBdtw}, every atom displaced by distance $d$ in one of the four cardinal directions forces the tweezer to travel at least $d$ in that direction, so $\Dtw$ is bounded from below by the sum of the maximal required displacements in each of the four directions.
\end{proof}

\Cref{lem:lowerbound} can be tight in the relaxed constraints setting. For $\Delta_x = \Delta_y = \{1, \dots, n\}$, the binary stroke displacement list $(2^0, 2^1, \dots, 2^{\lfloor\log_2 n\rfloor})$ generates every required displacement as a subset sum with exactly $\lfloor\log_2 n\rfloor + 1$ strokes per axis, saturating~\cref{eq:LBnmoves}.
For $\Delta_x = \Delta_y = \{-a, \dots, b\} \setminus \{0\}$, the negabinary list $(2^0, -2^1, 2^2, \dots)$ does the same.
The distance bound of~\cref{eq:LBdtw} is saturated for $\Delta_x = \{d_1, \dots, d_n\}$ (analogously for $\Delta_y$) with $d_k<d_{k+1}$ and $d_1>0$ by a stroke displacement list $(d_1,d_2-d_1, d_3-d_2,\ldots)$ such that the subset sum $\sum_i b^{(k)}_i m_i$ of atom $k$ has $b^{(k)}_i=1$ iff $i\lessapprox k$; in the presence of positive and negative displacements $d_k$ one gets two analogous sets of stroke displacement lists.
Whether these strokes can be implemented with the full AOD constraints and without collisions is strongly dependent on the movement scheme.

\subsection{The shear primitive}
\label{sec:shear}

\paragraph{Definition.}
A \emph{horizontal shear} of an $\ell \times d$ array is the affine map $(i, j) \mapsto (i + m(j - h),\, j)$ with reference row $h \in \Z$ and rate $m \in \Z$; vertical shears are defined analogously. Although each row of the static lattice is a valid AOD rectangle, distinct rows must be translated by the distinct amounts $m, 2m, \dots, (d-1)m$ (the maximum depends on $h$), so a shear requires multiple strokes. A change of reference line can be factored out as a global shift of the $x$ coordinates.

\paragraph{Three schemes.}
We introduce three example implementations using horizontal AOD strokes. We assume that the reference line is the top row of the array.

\begin{itemize}
\item \textit{Line-by-line.} Translate each row in turn. \emph{Cost:} one move per row so $\Nmoves = d - 1$. The distance travelled by the tweezer in each move is $1,2,\dots,d-1$, so $\Dtw = |m| d(d-1)/2$.

\item \textit{Block.} Stroke displacement list $(m, m, \dots, m)$ of length $d-1$; at stroke~$k$, the AOD selects and moves all rows with target displacement exceeding $km$. \emph{Cost:} the number of moves is the same: $\Nmoves = d-1$. But as this scheme maximises the parallelisation of movements, we saturate the bound of \cref{eq:LBdtw} to get $\Dtw = |m|(d-1)$. 

\item \textit{Binary.} The stroke displacement list is the binary coefficients $(m, 2m, 4m, \dots, 2^{\lfloor\log_2(d-1)\rfloor}m)$; at stroke~$k$, the AOD selects all rows whose target displacement has a~$1$ in bit~$k$ of its binary representation. \emph{Cost:} This methods provides a logarithmically scaling number of moves: $\Nmoves = \lfloor\log_2(d-1)\rfloor + 1$, which saturates the bound of~\cref{eq:LBnmoves}. We also have: $\Dtw = |m|(2^{\Nmoves}-1)$. A convenient higher bound with the same scaling is $|m|(2d-3)$, reached when $d = 2^c +1$ for some $c\in\mathbb{N}$. Throughout the text, we will use $f\lessapprox g$ to state that $f=O(g)$ and $f\le g$. Thus $\Dtw\lessapprox |m|(2d-3)$.
\end{itemize}

\Cref{fig:shears} illustrates the three schemes on a $5 \times 5$ array. For other reference lines, the same three schemes can be used, with the eventual application of a global shift. For interior reference lines, the binary scheme may be replaced by a negabinary variant with the same asymptotic, and the exact same cost for the central row in odd $d$ arrays. All three schemes are collision-free: every stroke is a horizontal translation, so atoms never leave their row, and within a row the whole line is shifted rigidly by a common amount, preserving spacing.

\paragraph{Constant-jerk timing.}
At constant speed, the time taken by each implementation would be proportional to $\Dtw$, thus the block scheme would be faster. However, as described previously, AOD movements are usually done using non-constant velocity profiles such as constant jerk, which gives a displacement time of\footnote{Recall $\alpha = (12/j\lambda)^{1/3}$ for jerk $j$.} $\alpha \ell^{1/3}$ over a distance $\ell$ . In this setting, longer moves are less penalised, and $\Nmoves$ gains in importance. The block scheme uses $d-1$ unit strokes, so it's execution time is $T_{\mathrm{block}} = \alpha(d-1)$. The binary scheme uses $\lfloor\log_2(d-1)\rfloor + 1$ strokes of lengths $1, 2, 4, \dots$, giving an execution time of:
\begin{align}
T_{\mathrm{binary}}
&= \alpha \!\!\sum_{k=0}^{\lfloor\log_2(d-1)\rfloor}\!\!\! 2^{k/3}
 = \alpha \cdot \frac{2^{(\lfloor\log_2(d-1)\rfloor+1)/3} - 1}{2^{1/3} - 1} \notag\\
&\lessapprox4.85\alpha (d-1)^{1/3}.
\label{eq:binarytime}
\end{align}
Clearly $T_{\mathrm{block}} > T_{\mathrm{binary}}$ for all $d \ge 4$; the binary scheme is therefore preferred under constant-jerk for array size above $d=3$.

\begin{figure*}[t]
\centering
\includegraphics[width=\textwidth]{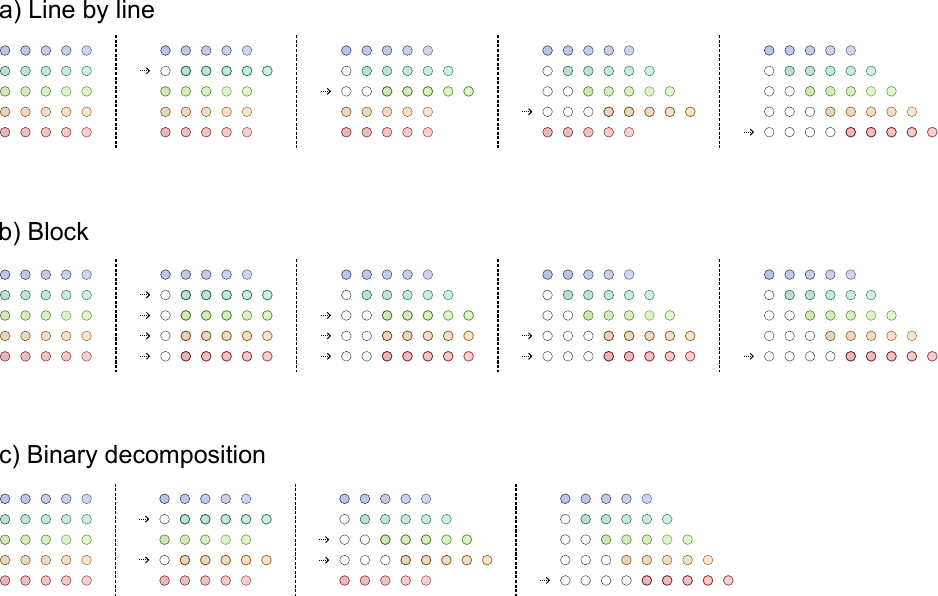}
\caption{The three shear schemes applied to a $5 \times 5$ atom array with the top row as reference. Each panel shows the array after a single AOD  stroke. (a) \emph{Line-by-line} moves one row per stroke ($d - 1 = 4$ strokes). (b) \emph{Block} moves all rows whose target displacement still exceeds the current cumulative shift, four strokes of displacement $1$ each. (c) \emph{Binary decomposition} encodes each row's target displacement in binary: at stroke $k$ the AOD selects rows whose target has a $1$ in bit $k$ and moves them by $2^k$ ($\lfloor\log_2(d-1)\rfloor + 1 = 3$ strokes).}
\label{fig:shears}
\end{figure*}

\subsection{Rotation by shearing}
\label{sec:rotation}

The $90^\circ$ rotation $(x, y) \mapsto (-y, x)$ of a $d \times d$ square array ($d$ odd, centred at the origin) is a standard geometric operation underlying a range of protocols for different QEC schemes, as we detail in \Cref{sec:applications}. If the rotation is performed atom-by-atom, it costs $\Nmoves = O(d^2)$ and $\Dtw = O(d^3)$.

\paragraph{Paeth decomposition.}
Paeth~\cite{paeth1990fast} observed that any rotation factors into three
shears:
\begin{multline}
R(\theta) =
\begin{pmatrix} 1 & -\tan(\theta/2) \\ 0 & 1 \end{pmatrix}
\begin{pmatrix} 1 & 0 \\ \sin\theta & 1 \end{pmatrix} \\
{} \times
\begin{pmatrix} 1 & -\tan(\theta/2) \\ 0 & 1 \end{pmatrix}.
\label{eq:paeth}
\end{multline}
At $\theta = 90^\circ$ this is three shears (horizontal, vertical, horizontal), each an instance of the primitive of
\Cref{sec:shear}. \Cref{fig:rotation} illustrates a $5 \times 5$ array through the three stages. Done naively, the three shears leave a residual translation which may be resolved by either taking the reference line of each shear to be the central row/column (fixing the centre of the array throughout, so no correction is needed), or taking the first and last shears with opposite directions and mirrored reference lines (reducing the correction to a single vertical stroke).

\paragraph{Cost.}
The first and third shears act on $d$ rows of length $d$; the middle shear, on $2d-1$ non-empty columns after the first shear. The binary scheme gives:
\begin{align}
\Nmoves^{\text{rot}} &= 3 \lfloor \log_2(d-1) \rfloor + 4, \notag\\
\Dtw^{\text{rot}} &\lessapprox 8d - 11, \\
\Tjerk^{\text{rot}} &\lessapprox 16\alpha(d-1)^{1/3}. \notag
\end{align}
Applied to the $90^\circ$ rotation, which has
$\Delta_{x/y} = \{-(d-1), \dots, -1, 1, \dots, d-1\}$,
\Cref{lem:lowerbound} reads
$\Nmoves \ge 2\lfloor\log_2(d-1)\rfloor + 4$ and $\Dtw \ge 4(d-1)$. The construction is within a factor $3/2$ of optimal in $\Nmoves$ and a factor $2$ in $\Dtw$. In contrast, the block scheme saturates $\Dtw$ at the cost of returning to linear $\Nmoves$ and $\Tjerk$. 

\begin{figure*}[t]
\centering
\includegraphics[width=1.0\textwidth]{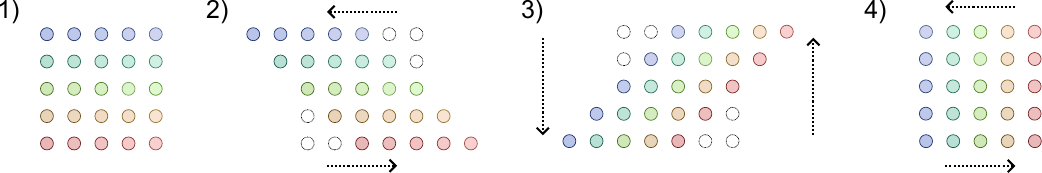}
\caption{Paeth decomposition of the $90^\circ$ rotation of a $5 \times 5$ array into three shears. Panels 1--4 show the array after each of the three shears. Here, for each shear, the reference line is chosen to be the middle row/column.
  }
\label{fig:rotation}
\end{figure*}

\begin{remark}
If the transfer of the atoms between the dynamic to the static array is a bottleneck for the fidelity (as it can be the case for an SLM static lattice with dynamic AOD \cite{Bluvstein2024}), the block scheme provides a method where each atom is at maximum transferred 6 times, independent of d: If a partial transfer of the atoms from the dynamic AOD to the static lattice can be performed, the atoms can be dropped line by line during the block movement in the shears.\footnote{To drop a single line one may need to apply a $\lambda/2$ offset to only one of the rows/columns to align with the static tweezer, see \cref{sec:hardware}.}
\end{remark}

\begin{remark}\label{rmk:subdivision}
In some cases, it is possible to improve on the Paeth + binary protocol by one move by dividing the array into two parts and applying shears in parallel. This amounts, for odd $d$, to a move count of: 
$$
\lfloor \log_2(d-1)\rfloor+2\lfloor \log_2(3d-3)\rfloor+1
$$
and gives a one move improvement (compared with binary + Paeth) whenever the decimal part of $log_2(d-1)$ is strictly smaller than $ 2-\log_2(3)\approx 0.415$. For even $d>2$, the move count is:
$$
\lfloor \log_2(d-2)\rfloor+\lfloor \log_2(3d-4)\rfloor+\lfloor \log_2(3d-2)\rfloor+1
$$
In the $d=3$ case, this protocol gives a six-move solution saturating the theoretical lower bound. The moves are illustrated in figure \ref{fig:d3_opti}. The general protocol is illustrated for $d=9$ in figure \ref{fig:d5_opti}.
\end{remark}

\begin{figure}[t]
    \centering
    \includegraphics[width=0.95\linewidth]{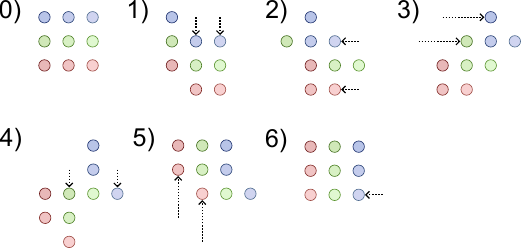}
    \caption{Optimal 6-move 90$^\circ$ rotation for $d=3$.}
    \label{fig:d3_opti}
\end{figure}

\begin{figure*}[t]
    \centering
    \includegraphics[width=0.95\textwidth]{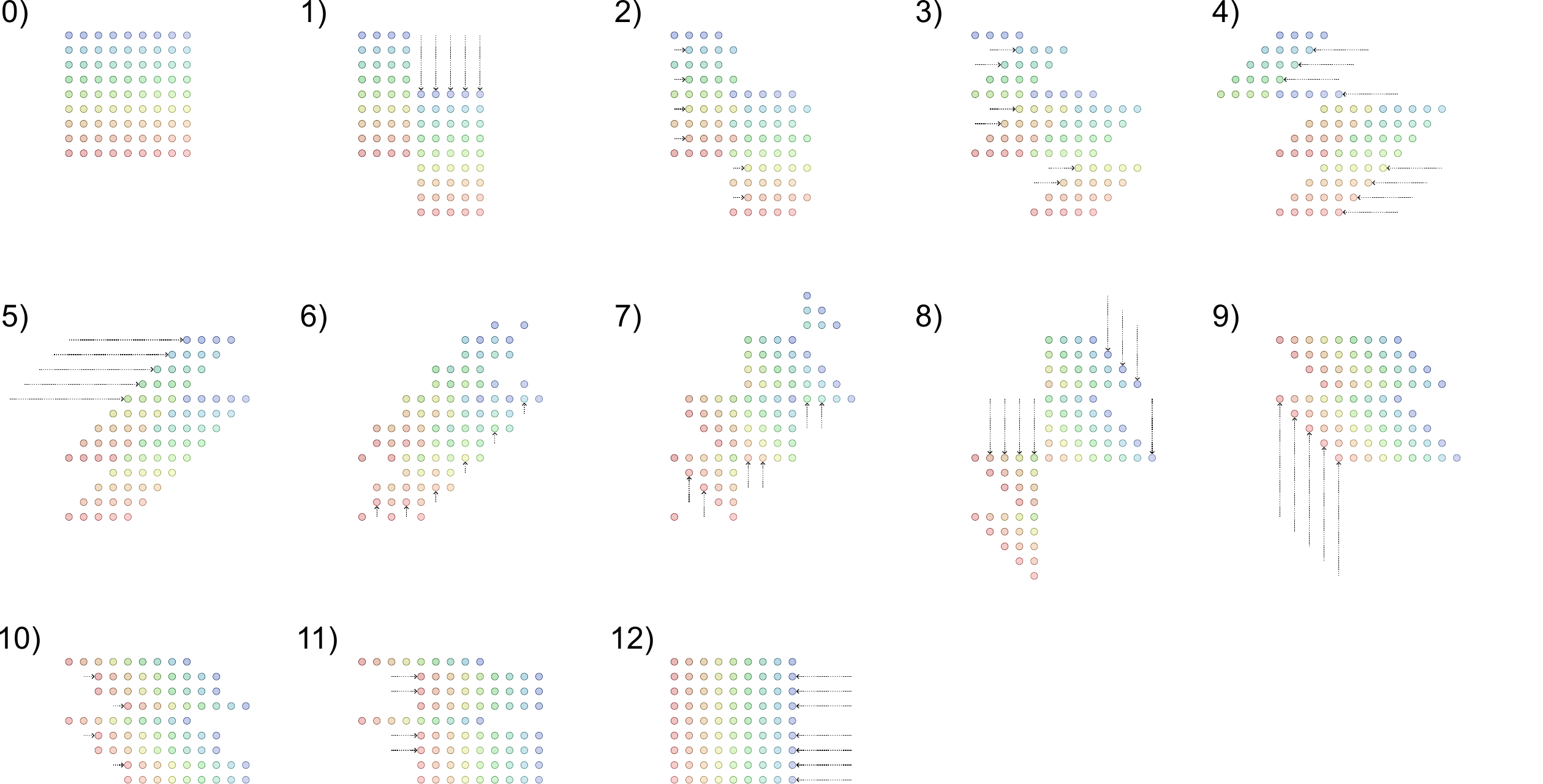}
    \caption{Sub block moves for $d=9$. Twelve moves beats the Paeth+binary protocol by one move.}
    \label{fig:d5_opti}
\end{figure*}

\subsection{Reflection}
\label{sec:negation}
A reflection about the origin is a single-axis rearrangement $x \to -x$ where the atom at coordinate $i$ acquires a displacement $-2i$. The required displacements carry both signs, so we replace the binary scheme with its \emph{negabinary} variant. We write $-2i=\sum_{k\ge0}b_k(i)\,(-2)^k$ with $b_k(i)\in\{0,1\}$; stroke $k$ then translates by $(-2)^k$ every atom with $b_k(i)=1$. The single fixed stroke list $\bigl((-2)^0,(-2)^1,(-2)^2,\dots\bigr)$ realises displacements of either sign. Applied to a line of length $l$, a reflection costs:
\begin{equation}
  \Nmoves=\lceil\log_2(l)\rceil,
  \label{eq:neg-cost}
\end{equation}
saturating the lower bound \eqref{eq:LBnmoves}. The moves are illustrated in \Cref{fig:negabinary_reflection} for $l=5$. Note that avoiding collisions during movement would require in this case small offsets (e.g.~by $\lambda/2$) before and after every move.
\begin{figure}[t]
    \centering
    \includegraphics[width=0.85\linewidth]{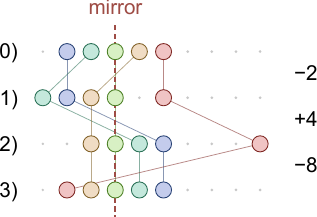}
    \caption{Reflection of a line of 5 atoms across a mirror axis, implemented via negabinary decomposition. The labels on the right indicate the distance of the stroke at each step.}
    \label{fig:negabinary_reflection}
\end{figure}
\begin{proposition}\label{prop:neg-collision}
The axis negation is collision-free after every stroke.
\end{proposition}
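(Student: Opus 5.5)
The plan is to show that after each stroke the occupied sites are pairwise distinct. This is exactly the validity condition of \Cref{sec:model}, and it also covers the release condition of \Cref{sec:hardware}, since a landing atom would otherwise share a site with a static atom. Label the atoms by their initial coordinates $i$ on the line, and let $b_j(i)\in\{0,1\}$ be the negabinary digits of $-2i$. After the first $k$ strokes, atom $i$ sits at
\begin{equation*}
p_k(i) = i + s_k(i), \qquad s_k(i) = \sum_{j=0}^{k-1} b_j(i)\,(-2)^j .
\end{equation*}
I will show that $p_k(i)=p_k(i')$ forces $i=i'$ for every $k$ from $0$ up to the total number of strokes. The argument combines a congruence modulo $2^k$ with a size bound on the partial sums.

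\emph{Congruence step.} The omitted tail $\sum_{j\ge k} b_j(i)(-2)^j$ is divisible by $2^k$, so $s_k(i)\equiv -2i \pmod{2^k}$. Suppose $p_k(i)=p_k(i')$. Then $i-i' = s_k(i')-s_k(i) \equiv -2i'+2i \pmod{2^k}$, which gives $i-i'\equiv 2(i-i')$ and hence $i-i'\equiv 0 \pmod{2^k}$.

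\emph{Range step.} The set of $k$-digit negabinary partial sums $\{\sum_{j<k} c_j(-2)^j : c_j\in\{0,1\}\}$ has $2^k$ distinct elements. It is a contiguous integer interval, running from the sum of the negative odd-index powers up to the sum of the positive even-index powers, and its length is $\sum_{j<k}2^j = 2^k-1$. Distinctness follows from uniqueness of negabinary representation. Contiguity can be checked by a short induction: appending the digit $(-2)^{k}$ translates the interval by $\pm 2^k$ and tiles it against the previous one. Since $s_k(i)$ and $s_k(i')$ both lie in this interval, $|i-i'| = |s_k(i)-s_k(i')| \le 2^k-1$. Together with divisibility by $2^k$, this forces $i=i'$. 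For $k=0$ the claim is trivial.

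I expect the only real care to be needed in the range step, namely stating cleanly that truncated negabinary sums form an interval of exactly $2^k$ consecutive integers. I will do this by induction on $k$, separating even and odd $k$ for the sign of the new power.

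I will also remark that the proposition only asserts distinct occupancy at the end of each stroke. Collisions while an atom is in transit are handled separately by the $\lambda/2$ offsets already mentioned above, so nothing further needs proving for those.
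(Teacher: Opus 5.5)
Your proof is correct, but it takes a different route from the paper's. The paper never uses the order in which strokes are executed. It fixes an arbitrary set $M$ of performed strokes and writes $i-i'=\sum_{k\in M}c_k(-2)^k$ next to $2(i-i')=\sum_{k\ge0}c_k(-2)^k$, with $c_k=b_k(i')-b_k(i)$. Eliminating $i-i'$ gives $\sum_k d_k2^k=0$ with $d_k=\pm c_k\in\{-1,0,1\}$, and a lowest-nonzero-digit parity argument then forces every $c_k=0$. You instead use that the performed strokes form a prefix $\{0,\dots,k-1\}$: the discarded tail gives $2^k\mid(i-i')$, the truncated sums give $|i-i'|\le 2^k-1$, and a divisible-but-too-small argument finishes. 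The paper's version buys order independence: positions stay distinct after \emph{any} subset of the negabinary strokes, so the strokes could be reordered (e.g.\ longest first) without re-checking collisions. Your congruence step genuinely needs the prefix structure. For $M=\{2\}$, the discarded strokes include $(-2)^1$, so the congruence gives no more than $2\mid(i-i')$, and the size bound $|i-i'|\le 4$ does not force $i=i'$. Your version, in turn, is more transparent for the natural increasing order, which is all the proposition as stated requires. Finally, your range step needs less than you plan to prove. No contiguity or induction is necessary, since $|s_k(i)-s_k(i')|=\bigl|\sum_{j<k}c_j(-2)^j\bigr|\le\sum_{j<k}2^j=2^k-1$ follows from the triangle inequality alone.
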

\begin{proof}
Atom positions depend only on the set $M$ of strokes already performed; fix $M$, and suppose two atoms collide, $p_M(i)=p_M(i')$ with $i\neq i'$, where $p_M(i)=i+\sum_{k\in M}b_k(i)(-2)^k$ is the position of the atom that started in $i$ after the strokes in $M$ are applied.
Subtracting the two positions,
\begin{equation*}
  i-i'=\sum_{k\in M}c_k(-2)^k,
\end{equation*}
where $c_k:=b_k(i')-b_k(i)\in\{-1,0,1\}$. Subtracting the negabinary expansions of $-2i$ and $-2i'$ instead gives $2(i-i')=\sum_{k\ge0}c_k(-2)^k$. Eliminating $i-i'$ between the two,
\begin{equation*}
  \sum_{k\ge0} c_k(-2)^k\Bigl(\mathbf 1_{k\in M}-\tfrac12\Bigr)=0,
  \quad\text{i.e.}\quad \sum_{k\ge0} d_k\,2^k=0,
\end{equation*}
with $d_k:=c_k(-1)^k\bigl(2\cdot\mathbf 1_{k\in M}-1\bigr)\in\{-1,0,1\}$, where $\mathbf 1_{k\in M}=1$ if $k\in M$ and $0$ otherwise. Were some $d_k$ nonzero, the smallest such index $j$ would satisfy $d_j=-2\sum_{k>j}d_k2^{k-j-1}$, forcing $d_j$ even and contradicting $d_j=\pm1$. Hence $d_k=0$ for all $k$, and since $2\cdot\mathbf 1_{k\in M}-1=\pm1$ never vanishes, $c_k=0$ for all $k$. Thus $-2i=-2i'$, i.e. $i=i'$.
\end{proof}

\begin{remark}
    The $90^\circ$ rotation and the axis-negation can be combined to perform diagonal and anti-diagonal reflections. Indeed, reflection about the $x=y$ diagonal can be decomposed into a vertical axis reflection followed by a $90^\circ$ anti-clockwise rotation. 
\end{remark}

\subsection{\texorpdfstring{$45^\circ$}{45 degrees} rotations}
\label{sec:45rotation}

While rotations are only symmetries of the lattice if they are multiples of 90$^\circ$, one can rotate by other angles if combined with stretches/shears.

A stretched rotation by 45$^\circ$ can be implemented in two steps: (i) a horizontal shear with $|m|=1$ with reference row in the center as for the 90$^\circ$ rotation, and (ii) a vertical shear with $(i,j) \mapsto (i, j+\lfloor (i-v)/2 \rfloor)$ with, e.g., center reference column $v=0$. The latter corresponds to an $|m|=0.5$ shear snapped to the lattice.

Strictly speaking, this step divides the array into even and odd subarrays, $A$ and $B$, and applies a stretched 45$^\circ$ rotation to each one but shifted by 0.5 lattice spacing with respect to one another because of the lattice snapping. After this step, subarray $A$ ($B$) is mapped to only even (odd) columns (or viceversa). The protocol is described in  \Cref{fig:45deg_rotation}.

\begin{figure*}
    \centering
    \includegraphics[width=\linewidth]{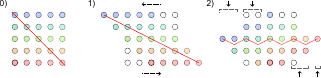}
    \caption{Approximate rotation by 45$^\circ$ decomposed as two shears. The dashed red line represents the diagonal of the original configuration (0). The final configuration (2) shows how the diagonal gets stretched and staggered to snap back to the static lattice.}
    \label{fig:45deg_rotation}
\end{figure*}

\section{Applications}
\label{sec:applications}

We give three illustrative applications of our transformations to quantum error correction (QEC) protocols. In \Cref{sec:surfacecode} we discuss transversal \textsf{H} and fold-transversal \textsf{S} on the surface code (SC), via rotation of the data block. In \cref{sec:autogate} we discuss logical Cliffords from code automorphisms in Iceberg codes \cite{chao2018fault,self2024protecting} and toric codes and in \Cref{sec:bsgadget} addressable Cliffords on La-cross LDPC codes \cite{pecorari2025addressable} via a Bacon-Shor gadget.

\subsection{Surface-code \textsf{H} and fold-transversal \textsf{S}}
\label{sec:surfacecode}

The surface code is arguably the most well-developed and understood QEC code, demonstrating time-efficient universal computing along with fast and performant decoders. This makes the rotated surface code the QEC code of choice for most early fault-tolerant hardware modalities~\cite{fowler2012surface, litinski2019game, ismail2025transversal}. Even in the recently proposed space-efficient architectures \cite{webster2026pinnacle, cain2026shor}, which choose qLDPC codes as their main code, the surface code is still used as an auxiliary gadget for computation, mainly for magic state preparation. We propose here, building on the efficient rotation introduced in \Cref{sec:rotation}, efficient implementations of the logical Hadamard and \textsf{S} gates in a distance $d$ surface code. Two variants of the surface code exist: the regular and the rotated. The regular variant uses $2d^2 - 2d + 1$ data qubits, which can be arranged either in two square arrays of linear size $d$ and $d-1$, a single rectangular array of size $d\times 2d-1$ or in a single square array of size $2d-1$. We adopt the latter arrangement in what follows for simplicity, but note that small gains in stroke counts are possible for the other variants.
The rotated variant requires only $d^2$ data qubits arranged in a  $d\times d$ square array.

\paragraph{Transversal Hadamard.}
A logical Hadamard on the surface code at distance $d$ is a transversal physical Hadamard on the data atoms (free in AOD cost) followed by a $90^\circ$ rotation of the data block to restore the canonical orientation of logical operators~\cite{horsman2012surface,chen2026transversal}. By \Cref{sec:rotation} the rotation costs $3\lfloor\log_2(d-1)\rfloor + 4$ strokes for the rotated variant and $3\lfloor\log_2(d)\rfloor + 7$ for the regular surface code. The constant-jerk time scales as $O(d^{1/3})$  --- a polynomial improvement on the
$O(d^{7/3})$ of a move-by-move implementation.

\paragraph{Fold-transversal \textsf{S} gate.}
The fold-transversal \textsf{S} gate~\cite{moussa2016transversal, breuckmann2024fold} acts on the regular surface code as a single-qubit $\textsf{S}$ or $\textsf{S}^\dagger$ on each atom of a $45^\circ$ diagonal, together with a \textsf{CZ} between every diagonal-mirror pair $(i, j) \leftrightarrow (j, i)$ (\Cref{fig:sgate}a). 
The natural $45^\circ$ \textsf{CZ}-disjoint partition is incompatible with a single
horizontal/vertical AOD and would require a second AOD oriented at $45^\circ$~\cite{chen2026transversal}. A move-by-move implementation costs $4(d^2 - d - 1)$ strokes and $O(d^{7/3})$ time.

We propose to use the $45^\circ$ rotation primitive of \Cref{sec:45rotation} to align the diagonal links with an AOD axis. The protocol has three forward stages followed by their reverse, applied on the $2d-1$ rows/columns of the regular surface code:
\begin{enumerate}
\item Horizontal shear of all $2d - 1$ rows with rate $1$
  (\Cref{fig:sgate}b).
  
\item Vertical shear of the inner $2d - 3$ occupied columns with rate $1$
  (\Cref{fig:sgate}c); the two outermost columns contain no
  \textsf{CZ} endpoints and are skipped. Stages 1 and 2 together
  align the $45^\circ$ diagonal with the horizontal axis, leaving
  each \textsf{CZ}-paired atom directly above its partner.
  
\item We propose two versions of this step: 
\\a) \emph{Logarithmic:} If space permits, the whole upper triangular block can be translated horizontally or vertically away from the other atoms, then reflected using the scheme of \Cref{sec:negation}, and brought back next to the lower triangular block. Then the \textsf{CZ} gates can be performed. This amounts to $\lceil\log_2(d-1) \rceil +3$ moves.
\\ b) \emph{Compact:} Otherwise, translate the upper triangular block downward in parallel, stopping at the prescribed distances to fire each \textsf{CZ} layer (\Cref{fig:sgate}d). This is $d - 1$ strokes of length $2$. Note that an additional unit-length stroke is used to offset the displaced block relative to the static one. 
  
\end{enumerate}
Stages 1, 2 and 3.a) reverse with the same stroke count as their forward counterparts; the inverse of stage 3.b needs only $2$ strokes (no intermediate stops on the return). To implement the fold-transversal $\textsf{S}$ gate on the rotated surface code, one can morph it into the regular surface code by applying half a QEC cycle with ancillas~\cite{chen2026transversal}.

\Cref{tab:surfacecost} collects the costs for the rotated surface code (not counting the QEC half-cycles). The Hadamard and the \textsf{S} gate with scheme 3.a) reach $O(d^{1/3})$ time; the compact $\textsf{S}$ gate (using scheme 3.b)) improves on the sequential $O(d^{7/3})$ by a factor $d^{4/3}$ but saturates at linear scaling, because the diagonal sweep involves $O(d)$ independent mirror-pair links.

\begin{figure*}[t]
\centering
\includegraphics[width=0.7\textwidth]{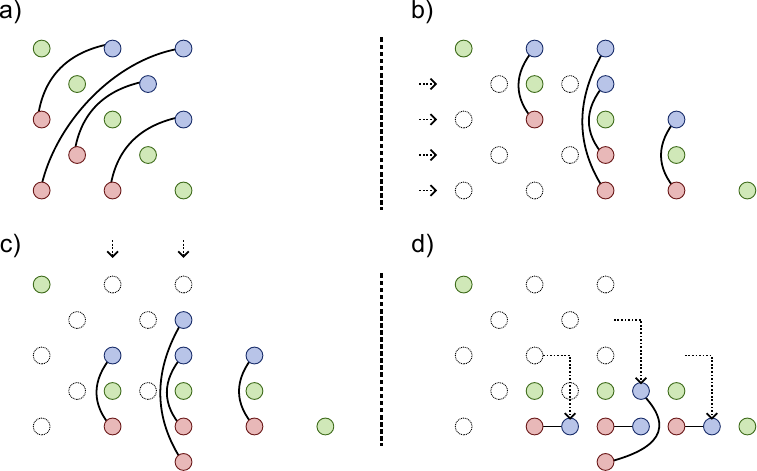}
\caption{The compact fold-transversal \textsf{S}-gate protocol for a distance-$3$ rotated surface code. (a) After half a QEC cycle, the register is in the regular surface code orientation with $2d - 1 = 5$ non-empty rows and columns; blue/red atoms undergo \textsf{CZ} gates (black arcs) across a $45^\circ$ diagonal, green atoms undergo single-qubit $S$ gates. (b) Horizontal shear of all $5$ rows with rate $1$. (c) Vertical shear of the inner $3$ atom-holding columns with rate $1$ (the leftmost and rightmost atoms are unaffected). Stages (b) and (c) together align the diagonal with the vertical axis, and the \textsf{CZ}-paired atoms now sit on common vertical lines. (d) The upper partner of each \textsf{CZ} pair translates downward in parallel, stopping along the way to perform the CZ layers. Reversing stages (a)--(c) restores the original layout. As stated in the main text it is also possible, if space permits, after stage (c) to instead apply the reflection of fig \ref{fig:negabinary_reflection}}.
\label{fig:sgate}
\end{figure*}

\begin{remark}[(Fold-)transversal gates on hypergraph product (HGP) codes] The regular surface code is the simplest instance of an HGP code. The (fold-)transversal Hadamard and $S$ logical operations can be generalised to broader classes of HGP codes encoding multiple logical qubits. In such codes, each logical qubit can be identified with a physical qubit in a top-left sub-array of the code.

For HGP codes where, the transversal Hadamard followed by a $90^{\circ}$ rotation realises a global logical Hadamard and swaps pairs of logical qubits whose identified physical qubits are mirror images across the main diagonal. Similarly, the fold-transversal \textsf{S} gate applies a logical \textsf{S} gate to each logical qubit whose identified physical qubit lies on the main diagonal, and a logical \textsf{CZ} to each mirror pair of logical qubits. HGP codes can be arranged in two square arrays of size $n\times n$ and $r \times r$ with $r \le n$, or in a single array of size $2n-1 \times 2n-1$ if $r < n$ ($2n \times 2n$ if $r = n$). Although less efficient for storage, the single-array arrangement costs only $3\lfloor\log_2(d)\rfloor + 7$ AOD strokes to rotate, whereas rotating each array individually costs $3(\lfloor\log_2(n-1)\rfloor + \lfloor\log_2(r-1)\rfloor) + 8$. The time
overhead and the cost of the fold-transversal gate \textsf{S} are similar to those derived for the rotated surface code, with $d$ replaced by $n$.
\end{remark}

\subsection{Automorphism gates}
\label{sec:autogate}

A code automorphism is a permutation $\sigma$ of the physical qubits that maps the stabiliser group to itself. Because $\sigma$ leaves the stabiliser group invariant, it preserves the codespace and acts on the logical qubits as a (possibly identity) logical Clifford operation. The permutation $\sigma$ may be realised by physically relocating the atoms with AOD movements with no physical gates applied and no associated gate error \cite{grassl2000cyclic,sayginel2025fault}.  We give now a number of examples of shears, negabinary reflections, rotations and Dehn twists in this context.

\paragraph{Example.}
The Iceberg code $[[n,n-2,2]]$ \cite{self2024protecting,chao2018fault} has stabilizers $X^{\otimes n}$ and $Z^{\otimes n}$ invariant under every qubit permutation. Its permutation automorphism group is thus $S_n$. With the qubits arranged in a row, the reflection permutation is a single-axis negation, realised by the negabinary scheme of \Cref{sec:negation}. For even $n$, in the basis $\bar X_a=X_aX_{n-1}$, $\bar Z_a=Z_aZ_0$ ($a=1,\dots,n-2$), it implements the logical Clifford $\bar P_a\mapsto\prod_{b\neq n-1-a}\bar P_b$ ($P\in\{X,Z\}$). For example, with the $[[6,4,2]]$ code this corresponds to $4$ AOD strokes realising a $9$ \textsf{CNOT} circuit.

\paragraph{Example.}
The toric code\footnote{The toric code was recently realised experimentally on a neutral atom platform~\cite{computing2026toric}.} places $\ell^2$ data atoms on an $\ell\times\ell$ square torus $T^2$, with $k=2$ logical qubits whose logical operators are the homology classes $H_1(T^2;\mathbb F_2)=\mathbb F_2^2$. The torus mapping class group $\mathrm{SL}(2,\mathbb Z)$ acts on these, generated by a $90^\circ$ rotation $S$ and a Dehn twist $T$. Both operations may be realised by the AOD movements discussed previously \Cref{sec:rotation,sec:shear}. 

As a code automorphism, the $90^{\circ}$ rotation $S$ maps $X$ and $Z$ stabilisers to stabilisers of the same type in the toric code and exchanges the horizontal and vertical logical operator, realising a logical $\textsf{SWAP}$ of the two qubits~\cite{sayginel2025fault}. Using the method of \Cref{sec:rotation}, the rotation is realised in $3\lfloor\log_2(\ell-1)\rfloor+4$ strokes.

\paragraph{Example.}
The Dehn twist $T$ mapping $(i,j)\mapsto(i+j\bmod\ell,\,j)$, realises a logical $\textsf{CNOT}$ between the two qubits. Unlike the rotation, it is not movement only and requires also a constant depth local entangling circuit~\cite{zhu2020instantaneous} to implement. Geometrically, it is realised as a shear (the primitive of \Cref{sec:shear}) of the $\ell$ rows, costing $\lfloor\log_2(\ell-1)\rfloor+1$ strokes, followed by $\ell-1$ single row strokes of length $\ell$. A total of $\lfloor\log_2(\ell-1)\rfloor+\ell$ strokes. Grouping the last $\ell -1$ single strokes requires a dynamic AOD grid shaped as a square of $(\ell-1)^2$ sites, which would cause overlaps during release with atoms from the upper triangular half sitting in the static trap, and thus cannot be done. See figure \ref{fig:dehntwist} for an illustration of the AOD movements. This construction also applies to more general hypergraph and products of cyclic codes --- the $[[90,8,10]]$ bivariate-bicycle code being one such example~\cite{tiew2025low}.

\begin{figure*}[t]
\centering
\includegraphics[width=0.95\textwidth]{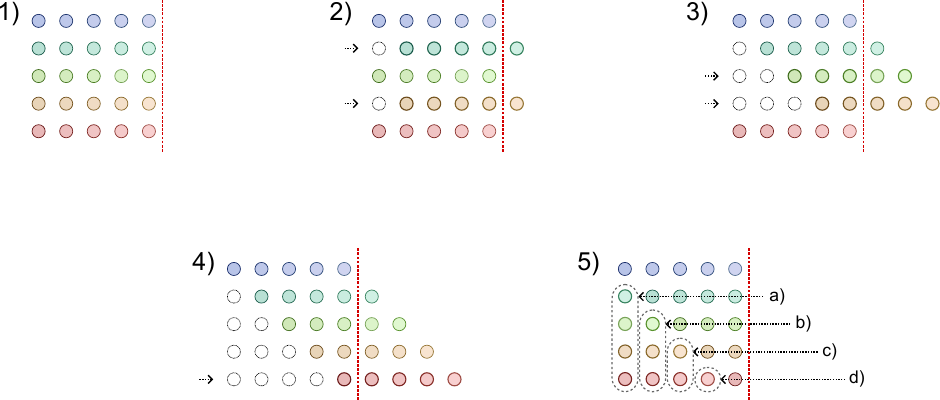}
\caption{A Dehn twist realised by binary decomposed shears.}
\label{fig:dehntwist}
\end{figure*}

\subsection{Bacon-Shor gadget for La-cross codes}
\label{sec:bsgadget}
La-cross codes are HGP codes built from a classical cyclic seed $h(x) = 1 + x + x^k$~\cite{pecorari2025high} --- a long-range generalisation of the regular surface code.
Addressable Clifford gates between La-cross logical qubits can be performed using an ancilla $d \times d$ Bacon-Shor (BS) gadget~\cite{pecorari2025addressable},  which, combined with correlated decoding techniques, enables fast logical gates in $\mathcal{O}(1)$ QEC rounds.

For instance, implementing a logical Hadamard on a given logical qubit requires: a transversal \textsf{CNOT} controlled by the BS rows and targeting $d$ non-overlapping representatives of $\bar X_L$ (the $X$ logical operator of the targeted logical qubit), a $90^\circ$ rotation of the BS patch re-aligning its rows with the La-cross columns, and a subsequent transversal \textsf{CZ} coupling them with the $d$ non-overlapping representatives of $\bar Z_L$. Measuring the BS patch in the $X$-basis completes the logical Hadamard on the targeted La-cross qubit.

Implementing the rotation step efficiently is therefore important for the speed of the whole protocol. Applying \Cref{sec:rotation} to the  rotation of the $d \times d$ BS patch yields $\Nmoves = 3\lfloor\log_2(d-1)\rfloor + 4$ strokes and $\Tjerk = O(d^{1/3})$. This allows fast logical computation on La-cross codes with neutral-atom hardware, an important ingredient for the design of future fault-tolerant quantum computers. 
 
\section{Discussion}
\label{sec:discussion}
We have given binary- and negabinary-decomposed AOD primitives for shearing, rotating, and reflecting atom arrays, each using $O(\log d)$ strokes and $O(d^{1/3})$ constant-jerk time --- a polynomial improvement over the $O(d^2)$ strokes and $O(d^{7/3})$ time of move-by-move schemes. We have demonstrated that the  primitives may be used in a number of QEC applications as data block rotation (\textit{e.g.} transversal \textsf{H}, fold-transversal \textsf{S}), automorphism gates (\textit{e.g.} $S$ and $T$ symmetries on the toric code), and ancilla rearrangement (\textit{e.g.} Bacon-Shor gadget for La-cross).

A number of directions remain for further investigation. The first is the factor $3/2$ gap in stroke count between the Paeth construction and the lower bound of Lemma \ref{lem:lowerbound}. Extensive brute-force search suggests that at $d=5$ nine strokes is optimal (realised by the array subdivision method of remark \ref{rmk:subdivision}), one above the lower bound of eight; it would thus be interesting to investigate tighter lower bounds. A second direction is to extend the primitives beyond rigid transformations to arbitrary qubit permutations. Finally, benchmarking logical fidelity against different schedules under realistic hardware noise remains for future work.


\paragraph*{Acknowledgements.}
A patent application related to the methods presented in this work has been filed by QPerfect.
We thank Shannon Whitlock and Stanimir Kondov for inspiring discussions around neutral atom hardware.
This work was supported by the French government through a CIFRE grant managed by the Association Nationale de la Recherche et de la Technologie (ANRT), convention No. 2025/1625.

\paragraph*{Author contributions.}
T.H. conceived the project and derived the main results, with additional contributions from all authors. All authors contributed to the discussions that shaped the method. T.H. and S.C. wrote the manuscript and produced the figures; A.P.O. and H.P. reviewed and revised it. A.P.O. supervised the project. Generative AI tools were used to assist with code development, text and image production, and bibliographic research. All generated content was reviewed, modified as needed, and validated by the authors. 

\bibliography{references}
\bibliographystyle{quantum}

\end{document}